\begin{document}
\mainmatter              
\title{Patterns of Multiplex Layer Entanglement across Real and Synthetic Networks}
\titlerunning{Entanglement in multiplex networks}  
%
\author{Bla\v{z} \v{S}krlj$^{1}$ \and Benjamin Renoust$^{2}$}
\authorrunning{\v{S}krlj and Renoust} 
%
\tocauthor{Bla\v{z} \v{S}krlj and Benjamin Renoust}
\institute{Jo\v{z}ef Stefan Institute, Jamova 38, Ljubljana,\\
Jo\v{z}ef Stefan International Postgraduate School, Jamova 38, Ljubljana\\
\email{blaz.skrlj@ijs.si},\\
\and
Osaka University, Institute for Datability Science\\
\email{renoust@ids.osaka-u.ac.jp}\\
}

\maketitle              

\begin{abstract}
Real world complex networks often exhibit multiplex structure, connecting entities from different aspects of physical systems such as social, transportation and biological networks. Little is known about general properties of such networks across disciplines. In this work, we first investigate how consistent are connectivity patterns across 35 real world multiplex networks. We demonstrate that entanglement homogeneity and intensity, two measures of layer consistency, indicate apparent differences between social and biological networks. We also investigate trade, co-authorship and transport networks. We show that real networks can be separated in the joint space of homogeneity and intensity, demonstrating the usefulness of the two measures for categorization of real multiplex networks. Finally, we design a multiplex network generator, where similar patterns (as observed in real networks), are emerging over the analysis of 11{,}905 synthetic multiplex networks with various topological properties.

\keywords{Multiplex networks, edge entanglement, network topology, network generator}
\end{abstract}

\section{Introduction}
\label{sec:introduction}

Real-world networks commonly consist of different types of entities, all connected into a single system. The abstraction of \emph{multiplex} networks offers a structure, capable of capturing the key parts of such systems, such as connectivity patterns. Multiplex networks emerge, and were studied in biology, social sciences, finance, logistics and more. They are both theoretically interesting, as well as practically useful \cite{battiston2014structural}. Recently, the notions of multiplex community detection and centralities have been a lively research area, indicating many insights can be obtained by studying such rich structures directly, without simplification \cite{gomez2013diffusion,vskrlj2019cbssd,chen2018suppressing} (\textit{e.g.}, aggregation into a single node type). Multiplex networks offer the opportunity to simultaneously explore multiple aspects of the same system~\cite{kivela2019visual}, and are as such indispensable for the study of \textit{e.g.}, biological or social networks, where entities can be naturally observed with respect to different aspects \textit{(e.g.}, an user on Twitter, Facebook and Snapchat is the same physical person, yet can be studied with respect to individual social networks where it is present).

The ideas, that influenced this work the most are discussed next. 
Since the structure of a multilayer corresponds to its layers and aspects~\cite{kivela2014multilayer}, the analysis of the organization of layers is key to understanding the properties of a multiplex network~\cite{renoust2015detangler}. The analysis of the overlapping edges between layers, namely \textit{edge entanglement}~\cite{renoust2014entanglement} studies how the different layers of a multiplex network intertwine to form a coherent whole. 

Even though the ideas related to description of multiplex networks are being actively developed \cite{wang2018social,omodei2015characterizing}, we believe little effort is focused on evaluation of such measures at larger scales, across multiple disciplines and contexts. This work was also inspired by multilayer flow analysis \cite{de2015identifying}, where distinct structures, describing parts of networks emerged.
The contributions of this work are multiple, and are described next:

\begin{itemize}
\item We present an efficient implementation of multiplex homogeneity and intensity, the two measures used in this work \cite{renoust2014entanglement}
\item Both measures, along with normalized homogeneity, are computed for the first time on 35 real-world multiplex networks.
\item We demonstrate a distinct relationship between homogeneity and intensity, showing the two measures can separate between different types of multiplex networks. 
\item We present a multiplex network generator that produced networks with various degrees of intensity and homogeneity. We generated  11{,}905 synthetic networks, where patterns, similar to the ones in the real networks emerged.
\end{itemize}

\section{Multiplex networks}
A multiplex network can be defined as a sequence $M = \{G_l\}_{l \in L} = \{(V_{l},E_{l})\}_{l \in L}$ where $E_{l} \subseteq N_l \times V_l$ is a set of edges in one network $l \in L$ of the sequence~\cite{kivela2014multilayer}. 
Multiplex networks are commonly understood as layers comprised of interactions, where each layer corresponds to a specific aspect of the system, and nodes represent \emph{the same} entity across all layers. We represent a multiplex network as a structure $M=(V_M, E_M)$, where $V_M$ is the set of nodes and $E_M$ the set of all edges (in all layers). 

For example, a biological system can be studied at the protein, RNA or gene level~\cite{valdeolivas2018random}, and similarly, social networks can be studied by taking into account a person's presence on multiple platforms~\cite{mittal2019analysis}.
For computational purposes, such networks are commonly represented in the form of supra-adjacency matrices, where block-diagonal structure, connecting the same node across individual layers emerges \cite{cozzo2015structure}. Algorithms can operate on such matrices directly and thus exploit such additional information representing multiple aspects. Such approaches are useful when node-level information is considered.

Algorithms for analysis of multiplex networks can also operate on sparse, adjacency structure of the multiplex network directly, yet need to take into account that a given node is present in multiple layers. Such representation is suitable for this work, as we are focused primarily on how edges co-occur across \emph{layers}. Hence, this work focuses primarily on the relations between \emph{the layers} of a given multiplex network.
We next discuss the two measures we consider throughout this work.

\section{Multiplex entanglement and intensity}
\label{sec:entanglement}

We briefly discuss the entanglement measures definitions from previous work~\cite{renoust2014entanglement}.

\subsection{Layer interaction network}

Recall our multiplex network $M=(V_M,E_M)=\{G_l \}_{l \in L}$. Such a network really distinguishes itself from classical graphs through the use of different layers to connect nodes. These layers may have different patterns and may overlap together. There may even exist latent dependencies among these layers.
To investigate this matter, each layer could be abstracted to one single node and form a new graph, the Layer Interaction Network (\textit{LIN})~\cite{renoust2014entanglement}. Visualizing the LIN is a key component for multiplex network visualization such as in Detangler~\cite{renoust2015detangler}.

In the LIN, $LIN=(L,F)$, each node corresponds to a layer $l, l', l'', \ldots \in L$ of the multilayer network $M$, and each edge $f \in F$ captures when two layers overlap through edges. More formally, there exist an edge $f=(l, l')$ whenever there exists at least two nodes $u, v \in V_M$ such that there exists at least one edge connecting these two nodes on each layer $e_M = (u,v) \in l$ and $e'_M = (u,v) \in l'$. 
The LIN can be interpreted as an edge-layer co-occurrence graph, and the weight of an edge $f = (l, l')$, denoted as $n_{l, l'}$ equals the number of times layers $l, l'$ co-occur. 
By extension, $n_{l, l}$ is the number of edges on layer $l$. This process is illustrated in Figure~\ref{fig:toyexample}.

\begin{figure}[t]
\centering
\includegraphics[width=\linewidth]{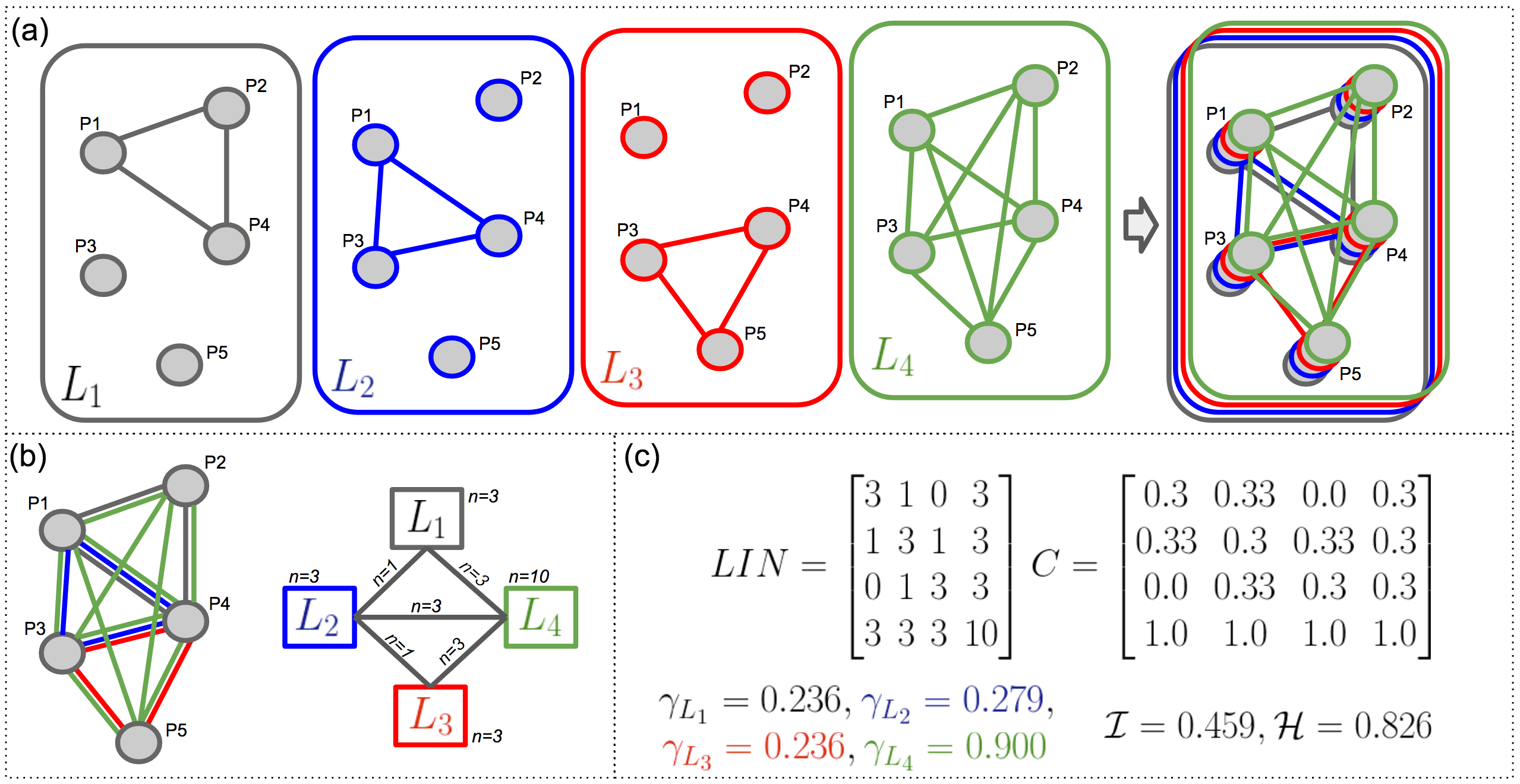}
\caption{A toy example of layer entanglement computation: a) separated layers considered in a multilayer network; b) constructing the layer interaction network from the example; c) measuring entanglement from the example.}
\label{fig:toyexample}
\end{figure}

\subsection{Layer entanglement}
\label{subsec:entanglement}

The analysis of edge entanglement is inspired by the analysis of relation content in social networks~\cite{burt1985relation}. The idea is to study the redundancy between relation content, each forming in our formalism a different layer. The edge entanglement measures the ``influence'' of a layer in its neighborhood.

This measure is recursively defined: the entanglement $\gamma_l$ of a layer $l$ is defined upon the entanglement of the layers it is entangled with. Similarly to the eigen centrality~\cite{Wasserman1994}, this translates into the recursive equation: $$\gamma_l.\lambda=\sum_{l'\in T}{\frac{n_{ll'}}{n_l}\gamma_l'}.$$ The entanglement of a layer $\gamma_l$ can be retrieved from a vector $\vec{\gamma}$ which corresponds to the right eigenvector (associated to the maximum eigenvalue $\lambda$) of the layer overlap frequency matrix with corresponding overlap, defined as:
\begin{equation*}
C = (c_{ll'}), \quad \textrm{where} \quad c_{ll'} = \frac{n_{ll'}}{n_{ll}}.
\end{equation*}
this metric was initially discussed in \cite{renoust2014entanglement}), and is constructed using the weights in the $LIN$ (see Figs.~\ref{fig:toyexample} and~\ref{fig:maxHI}).

\subsection{Entanglement intensity and homogeneity}

\begin{figure}[t]
\centering
\includegraphics[width=\linewidth]{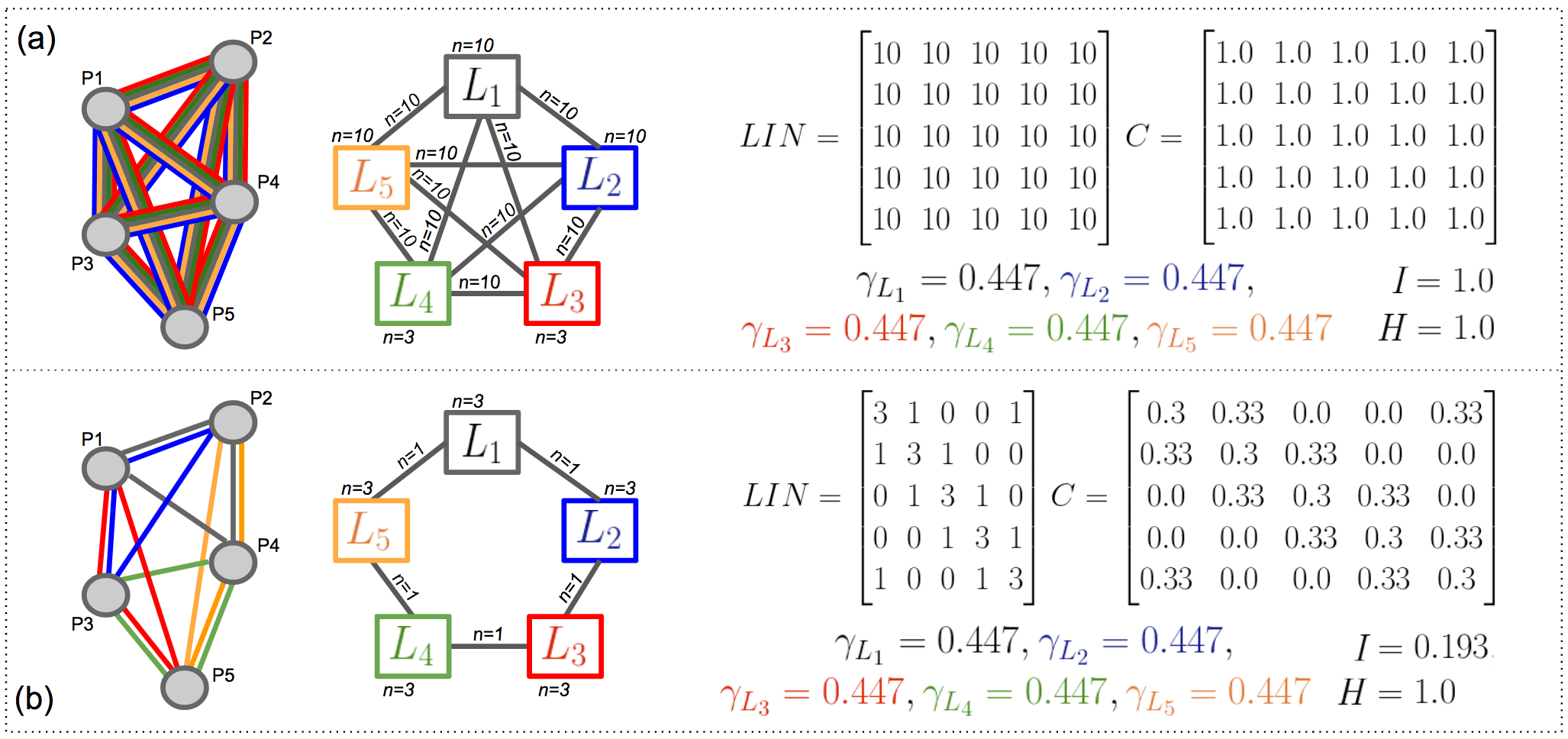}
\caption{Two very different cases of maximum homogeneity $\mathit{H}=1$, the multiplex network and the $LIN$ are shown, with matrices and entanglement measures. a) all layers are saturating all edges, so we have maximum intensity $\mathit{I}=1$; b) layers are well balanced, but we may have a lot more interactions possible.}
\label{fig:maxHI}
\end{figure}

The layer entanglement $\gamma_l$ measures the share of layer $l$ overlapping with other layers, so that nodes of $M$ are connected. The more a group of layers interacts together, the more the nodes they connect will be cohesive in view of \textit{these} layers, hence the more $\gamma_l\; \forall l \in L $ values will be similar (their share of entanglement will be similar). This is captured by the \textit{entanglement homogeneity}~\cite{renoust2014entanglement} which is then defined as the following cosine similarity: $$\mathit{H}=\frac{<1_L,\gamma>}{\lVert1_L \rVert \lVert\gamma \rVert} \in [0,1].$$ Optimal homogenity is not necessarily reached only when all nodes are connected through all layers, but also when all nodes are connected in a very balanced manner between all layers (see Figure~\ref{fig:maxHI}). Homogeneity thus permits various \emph{symmetries} in a given LIN.

When a maximum overlap is reached through all layers in the network, the frequencies in the matrix $C$ (of size $|L| \times |L|)$ are saturated with $C_{i,j}=1$. This gives us a theoretical limit to measure the amount of layer overlap through the \textit{entanglement intensity}~\cite{renoust2014entanglement}, defined as: $$\mathit{I}=\lambda/|L|.$$

\section{A multiplex network generator}
\label{sec:multiplex-generator}
In this section, we describe an algorithm for generation of multiplex networks based on the following observations. Let $M = (V_M,E_M)$  represent a multiplex network with $L$ layers. 
Each node is associated to a random number of layers $\{l_1, l_2, \ldots, l_i\} \subseteq L$.
Now for each layer $l_i \in L$ there is a set of nodes $V_{l_i} \subseteq V_M$ which form a potential set of edges of size $|E_{l_i}| = \frac{1}{2}|V_{l_i}|(|V_{l_i}|-1)$. We introduce the probability $p$ of an edge to be created between any pair of node on a layer so we may avoid cliques to form on each layer. For algorithmic reasons, this probability is implemented as an edge dropout $d$ such as $p = 1-d$,  and randomly prune edges from a potential clique. Thus, the higher the $d$, the sparser the network. Intuitively, the more similar a given random multiplex is to a clique over each layer, the higher its intensity. 
\begin{algorithm}[t]
\Parameter{Number of nodes $v$, number of layers $k$, dropout $d$}
\KwResult{A multiplex network $M$}
$M$ $\leftarrow$ emptyMultiplexObject\;
\For{for node in $[1 \dots v]$}{
    numberOfLayers $\leftarrow$ randomNumber($k$) \Comment*[r]{Layer presence is random.}
    layerNodes $\leftarrow$ assignNodeToLayers(node, numberOfLayers)\;
    update($M$, layerNodes)\Comment*[r]{Update global network.}
}
\For{layer $l_i$ with corresponding node set $V_{l_i}$}{
    nodeClique $\leftarrow$ generator of node pairs from $V_{l_i}$\;
    finalLayer $\leftarrow$ sampleWithProbability(nodeClique, $1-d$)\Comment*[r]{Sample via $d$.}
    update($M$,finalLayer)\Comment*[r]{Update global network.}
}
\Return{$M$}\;
\caption{Multiplex network generator.}
 \label{algo:rep}
\end{algorithm}
The purpose of this generator is to offer a simple testbed for further exploration, as well as additional evidence of the relation between homogeneity and intensity on many random, synthetic networks. The Algorithm~\ref{algo:rep} represents the proposed procedure.

The generator first randomly assigns the same node index to randomly many layers (lines 1-6). Once assigned, the layers are processed by applying the dropout on $|V_{l_i}| \choose 2$ possible edges in layer $l_i$. 
The global multiplex is updated during this process (lines 7- 12). Note that in line 8, the whole clique is virtually generated. This step is not necessary, as commonly only a small number of edges need to be sampled from all possible edge combinations. The implementation thus uses a generator with lazy evaluation, avoiding potential combinatorial explosion with a large number of nodes (very large networks).

\subsection{Some theoretical properties of the generator}
In this section we show two properties of the proposed generator. We denote $v = |V_M|$ the parameter setting the number of nodes of the network, $k = |L|$ the parameter setting the number of edge layers in the network, and $d$ the edge dropout. 
\begin{proposition}[Number of edges]
Let $\phi \in \mathbb{N}^{+}$ represent the number of possible edges. Then $\phi \leq k \cdot \binom{v}{2}.$
\end{proposition}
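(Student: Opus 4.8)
The plan is to decompose the total number of possible edges $\phi$ as a sum of the per-layer contributions and then bound each summand by the maximum clique size on $v$ nodes. First I would recall from the generator's construction that, for each layer $l_i$, the potential edge set is the full clique on its node set $V_{l_i}$, so that layer contributes exactly $\binom{|V_{l_i}|}{2} = \frac{1}{2}|V_{l_i}|(|V_{l_i}|-1)$ possible edges. Summing over the $k = |L|$ layers gives the exact expression
$$\phi = \sum_{i=1}^{k} \binom{|V_{l_i}|}{2}.$$

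Next I would invoke the containment $V_{l_i} \subseteq V_M$, which holds by definition of the generator (every layer's node set is drawn from the global node set), to conclude $|V_{l_i}| \leq |V_M| = v$ for each $i$. Since $\binom{n}{2}$ is monotonically nondecreasing in $n$, this yields the per-layer bound $\binom{|V_{l_i}|}{2} \leq \binom{v}{2}$. Substituting into the sum and using that there are exactly $k$ layers gives
$$\phi = \sum_{i=1}^{k} \binom{|V_{l_i}|}{2} \leq \sum_{i=1}^{k} \binom{v}{2} = k \cdot \binom{v}{2},$$
which is the claimed inequality.

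The argument is elementary and I do not expect a genuine obstacle; the only point worth stating carefully is \emph{why} each layer contributes at most $\binom{v}{2}$. This is precisely the observation that a layer can be at most a complete graph on the full vertex set, and equality $\phi = k \cdot \binom{v}{2}$ holds exactly when every node is assigned to every layer. In the generic case layer membership is random and strict, so the bound is loose --- a remark I would add to connect the statement back to the intuition that denser, more clique-like layers push the network toward higher intensity.
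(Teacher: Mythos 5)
Your proof is correct and follows essentially the same route as the paper's: bound each layer's potential edge set by the clique on all $v$ nodes, i.e.\ $\binom{|V_{l_i}|}{2} \leq \binom{v}{2}$, and multiply by the $k$ layers. Your version is simply a more careful writing of the paper's argument, with the explicit sum decomposition and the equality condition added as a useful remark.
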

\begin{proof}
Note that in multiplex layers, each layer can have at most $v$ nodes. Assuming they form a clique, each layer is thus comprised of $v \choose 2$ nodes. As there are $k$ layers, there can be at most $k \cdot \binom{v}{2}$ edges --- a clique of $v$ nodes in each layer. As each layer is during generation subject to dropout, which is neglected, when set to 0 (no edges are erased), we refer to this bound as $\phi \leq k \cdot \binom{v}{2}$.  \qed
\end{proof}

\begin{corollary}[Time complexity]
In lower limit, $d \rightarrow 0$, thus a full clique needs to be constructed, assuming each node is projected across all layers. The complexity \textit{w.r.t.} the number of layers and edges is: $\mathcal{O}(k \cdot \binom{v}{2}) = \mathcal{O}(|E_M|).$
\end{corollary}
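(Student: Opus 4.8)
The plan is to bound the running time of Algorithm~\ref{algo:rep} by examining its two loops separately in the limiting regime $d \to 0$ with every node projected onto every layer, and then to invoke the preceding proposition to identify the dominant term with $|E_M|$. The node-assignment loop (lines 2--6) executes exactly $v$ iterations, and within each iteration a single node is assigned to at most $k$ layers, so this phase costs $\mathcal{O}(v \cdot k)$. The edge-generation loop (lines 7--12) is where the cost concentrates: in the limit $d \to 0$ no edge is pruned, so the sampling step retains the entire clique, and under the assumption that every node appears on every layer each layer $l_i$ has $|V_{l_i}| = v$ and therefore contributes $\binom{v}{2}$ node pairs. Summing over all $k$ layers yields $k \cdot \binom{v}{2}$ edge operations.

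Next I would combine the two bounds. Since $\binom{v}{2} = v(v-1)/2 \geq v$ for $v \geq 3$, the edge-generation term $k \cdot \binom{v}{2}$ dominates the assignment term $v \cdot k$, so the total complexity collapses to $\mathcal{O}(k \cdot \binom{v}{2})$. To close the identity $\mathcal{O}(k \cdot \binom{v}{2}) = \mathcal{O}(|E_M|)$ I would appeal directly to the Proposition: in exactly this regime the bound $\phi \leq k \cdot \binom{v}{2}$ is attained with equality, since every layer is a full clique on all $v$ nodes and no edge is dropped, giving $|E_M| = k \cdot \binom{v}{2}$. Substituting this equality into the complexity expression yields the claim.

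The main obstacle I anticipate is reconciling the analysis with the lazy-evaluation remark in the text, where the clique is only \emph{virtually} generated so as to avoid combinatorial explosion. I would argue that the genuine cost scales with the number of edges actually sampled rather than with a full enumeration of pairs, and that these two quantities coincide precisely in the $d \to 0$ limit, which is the worst case under study. For $d > 0$ the sampled count is strictly smaller and the bound is not tight, so I would be careful to state that the corollary describes this limiting (clique) case only, and that the equality with $|E_M|$ is what makes the output-sensitive reading of the bound exact here.
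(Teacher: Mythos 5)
Your proposal is correct and follows essentially the same route as the paper, which treats the corollary as an immediate consequence of the edge-count proposition: in the $d \to 0$, all-layers regime the bound $\phi \leq k \cdot \binom{v}{2}$ is attained with equality, so the generation cost is $\mathcal{O}(k \cdot \binom{v}{2}) = \mathcal{O}(|E_M|)$. Your explicit accounting of the two loops and the remark reconciling the bound with the lazy-evaluation implementation are more detailed than what the paper provides, but they do not change the argument.
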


Note that even though, theoretically, the proposed generator generates a clique and then samples from it, current, lazy implementation only \emph{generates} the edges needed to satisfy a given $d$ percentage. In practice, only when $d \approx 0$, the generator needs larger portions of space (and time). As such, fully connected networks do not represent real systems, we were able to generate networks with tens of thousands of nodes using this approach.

\section{Empirical evaluation}

\begin{table}[h]
\centering
\caption{Real multiplex networks and their properties. The ID in the second column corresponds to Figure 3 (c).}
\resizebox{0.8\textwidth}{!}{
\begin{tabular}{lccccccc}\toprule
                   Dataset &    ID &      Type &    Nodes &    Edges &  Number of layers &  Mean degree &      CC \\ \midrule
          arXiv-Netscience~\cite{de2015identifying} & 6 &Coauthorship &    26796 &    59026 &                13 &         4.41 &    3660 \\
               PierreAuger~\cite{de2015identifying} & 22 &  Coauthorship &      965 &     7153 &                16 &        14.82 &     131 \\
               Arabidopsis~\cite{stark2006biogrid} &  39  &   Genetic &     8765 &    18655 &                 7 &         4.26 &     387 \\
                       Bos~\cite{stark2006biogrid} &  3   &  Genetic &      369 &      322 &                 4 &         1.75 &      82 \\
                   Candida~\cite{stark2006biogrid} &    23,24,25   & Genetic &      418 &      398 &                 7 &         1.90 &      50 \\
                  Celegans~\cite{stark2006biogrid} &    32  & Genetic &     4557 &     8182 &                 6 &         3.59 &     193 \\
                DanioRerio~\cite{stark2006biogrid} & 26,27    &   Genetic &      180 &      188 &                 5 &         2.09 &      45 \\
                Drosophila~\cite{stark2006biogrid} & 31    &   Genetic &    11970 &    43367 &                 7 &         7.25 &     346 \\ 
                    Gallus~\cite{stark2006biogrid} &  16 &    Genetic &      367 &      389 &                 6 &         2.12 &      54 \\
           HepatitusCVirus~\cite{stark2006biogrid} &   33  &  Genetic &      129 &      137 &                 3 &         2.12 &       4 \\
                Homo Sapiens~\cite{stark2006biogrid} &   30  &  Genetic &    36194 &   170899 &                 7 &         9.44 &     785 \\
              HumanHerpes4~\cite{stark2006biogrid} &    29 &  Genetic &      261 &      259 &                 4 &         1.98 &      21 \\
                 HumanHIV1~\cite{stark2006biogrid} &   5   & Genetic &     1195 &     1355 &                 5 &         2.27 &      13 \\
               Oryctolagus~\cite{stark2006biogrid} &   7  &  Genetic &      151 &      144 &                 3 &         1.91 &      21 \\
                Plasmodium~\cite{stark2006biogrid} &  9  &    Genetic &     1206 &     2522 &                 3 &         4.18 &      27 \\
                    Rattus~\cite{stark2006biogrid} &    40 &   Genetic &     3263 &     4268 &                 6 &         2.62 &     296 \\
                 SacchCere~\cite{stark2006biogrid} &    2  & Genetic &    27994 &   282755 &                 7 &        20.20 &     432 \\
                 SacchPomb~\cite{stark2006biogrid} &1    &   Genetic &    10178 &    63677 &                 7 &        12.51 &     286 \\
                   Xenopus~\cite{stark2006biogrid} &  37, 38 &    Genetic &      582 &      620 &                 5 &         2.13 &     109 \\
            YeastLandscape~\cite{costanzo2010genetic} &   34 &   Genetic &    17770 &  8473997 &                 4 &       953.74 &       4 \\
                  CElegans~\cite{chen2006wiring} &   20  &  Neuronal &      791 &     5863 &                 3 &        14.82 &       6 \\
                Cannes2013~\cite{omodei2015characterizing} &  8  &  Social &   659951 &   991854 &                 3 &         3.01 &   48375 \\
 CKM-Physicians-Innovation~\cite{coleman1957diffusion} &  19   &  Social &      674 &     1551 &                 3 &         4.60 &      12 \\
                CS-Aarhus~\cite{magnani2013combinatorial} &   36  &   Social &      224 &      620 &                 5 &         5.54 &      13 \\
      Kapferer-Tailor-Shop~\cite{kapferer1972strategy} &  35   &   Social &      150 &     1018 &                 4 &        13.57 &       5 \\
      Krackhardt-High-Tech~\cite{krackhardt1987cognitive} &  13  &    Social &       63 &      312 &                 3 &         9.90 &       3 \\
           Lazega-Law-Firm~\cite{lazega2001collegial} &  18   &   Social &      211 &     2571 &                 3 &        24.37 &       3 \\
                MLKing2013~\cite{omodei2015characterizing} &   14  &   Social &   392542 &   396671 &                 3 &         2.02 &   36041 \\
       MoscowAthletics2013~\cite{omodei2015characterizing} &   17 &     Social &   133619 &   210250 &                 3 &         3.15 &    6323 \\
         ObamaInIsrael2013~\cite{omodei2015characterizing} &   21   &  Social &  3457453 &  4061960 &                 3 &         2.35 &  651141 \\
 Padgett-Florence-Families~\cite{padgett1993robust} &   28  &   Social &       26 &       35 &                 2 &         2.69 &       2 \\
   Vickers-Chan-7thGraders~\cite{vickers1981representing} &  0   &   Social &       87 &      740 &                 3 &        17.01 &       3 \\
                       FAO~\cite{de2015structural} & 15  &   Trade &    41713 &   318346 &               364 &        15.26 &     571 \\
                     EUAir~\cite{cardillo2013emergence} &  4 & Transport &     2034 &     3588 &                37 &         3.53 &      41 \\
                    London~\cite{de2014navigability} &  11,12  &  Transport &      399 &      441 &                 3 &         2.21 &       3 \\
                \bottomrule
\end{tabular}
}
\label{tab:summary}
\end{table}


In this section we discuss the empirical evaluation of the two considered measures across a series of real world networks. 

All considered networks are summarized in Table~\ref{tab:summary}\footnote{The networks are hosted at \url{https://comunelab.fbk.eu/data.php}}. All considered networks are static. We computed the metrics for all connected components.
The entanglement algorithm was integrated into the Py3plex library \cite{vskrlj2018py3plex}\footnote{\url{https://github.com/SkBlaz/Py3plex/blob/master/examples/example_entanglement.py}. The generator is accessible at \url{https://github.com/SkBlaz/Py3plex/blob/master/py3plex/core/random_generators.py}}. For each network, we computed homogeneity and intensity. For the generation of synthetic networks, we used the following hyperparameter ranges: 
\begin{itemize}
\item $v \in \{10,25,50,100,250,500,1000,2500\}$
\item $k \in \{3,4,5,6,7,8,9,10\}$
\item $d \in \{0.001,0.9,0.01\}$
\end{itemize}

\section{Results}

In this section we present the results of empirical evaluation. For readability purposes, we visualize individual results as distributions of a given score across network types. We first show entanglement metrics on real networks in Figure~\ref{fig:real}. We next present the results on the generated networks in Figure~\ref{fig:syn}.

\begin{figure}[h!]
\centering
\begin{tabular}{cc}
\subcaptionbox{Real networks: homogeneity $H$}{\includegraphics[width = 2.3in]{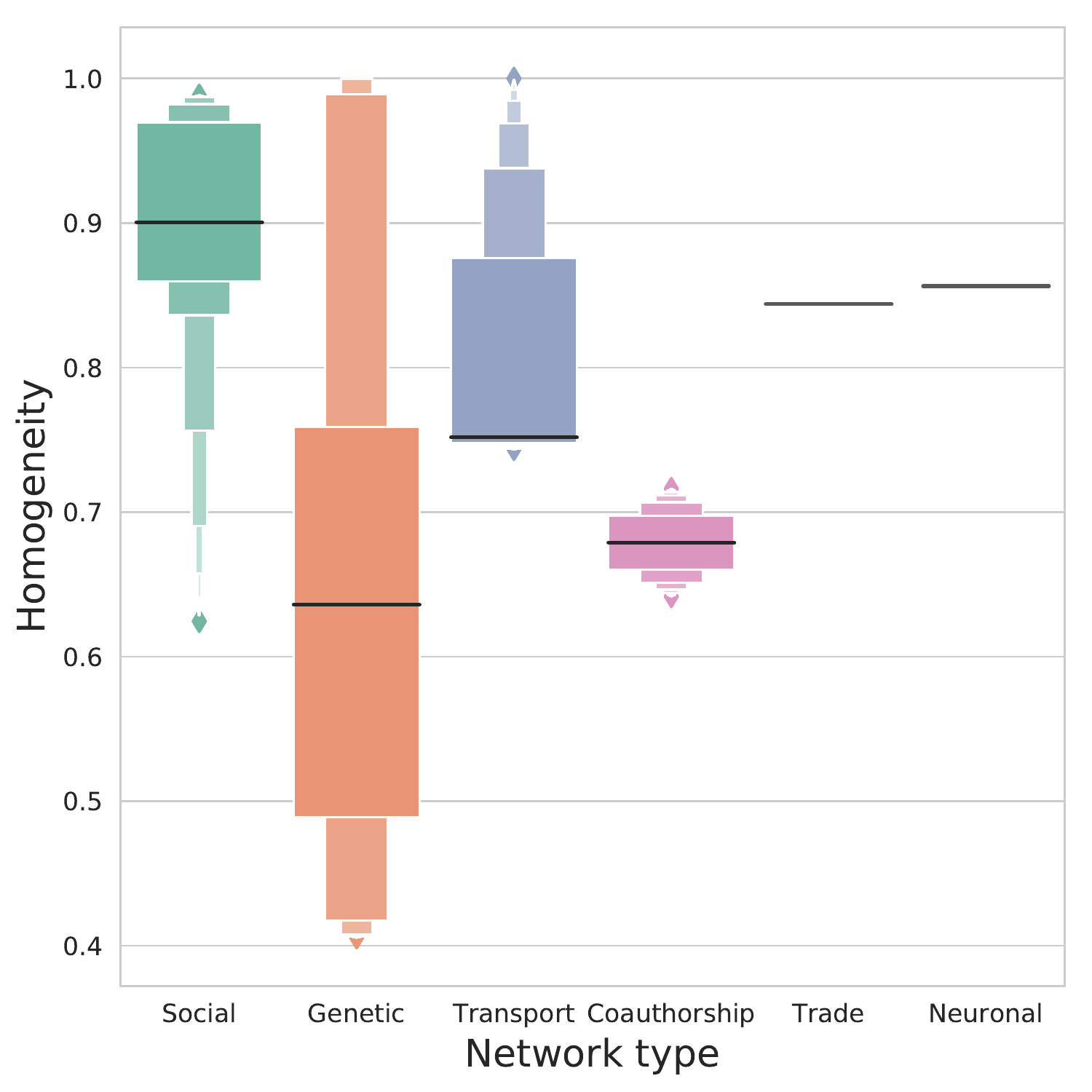}} &
\subcaptionbox{Real networks: intensity $I$}{\includegraphics[width = 2.3in]{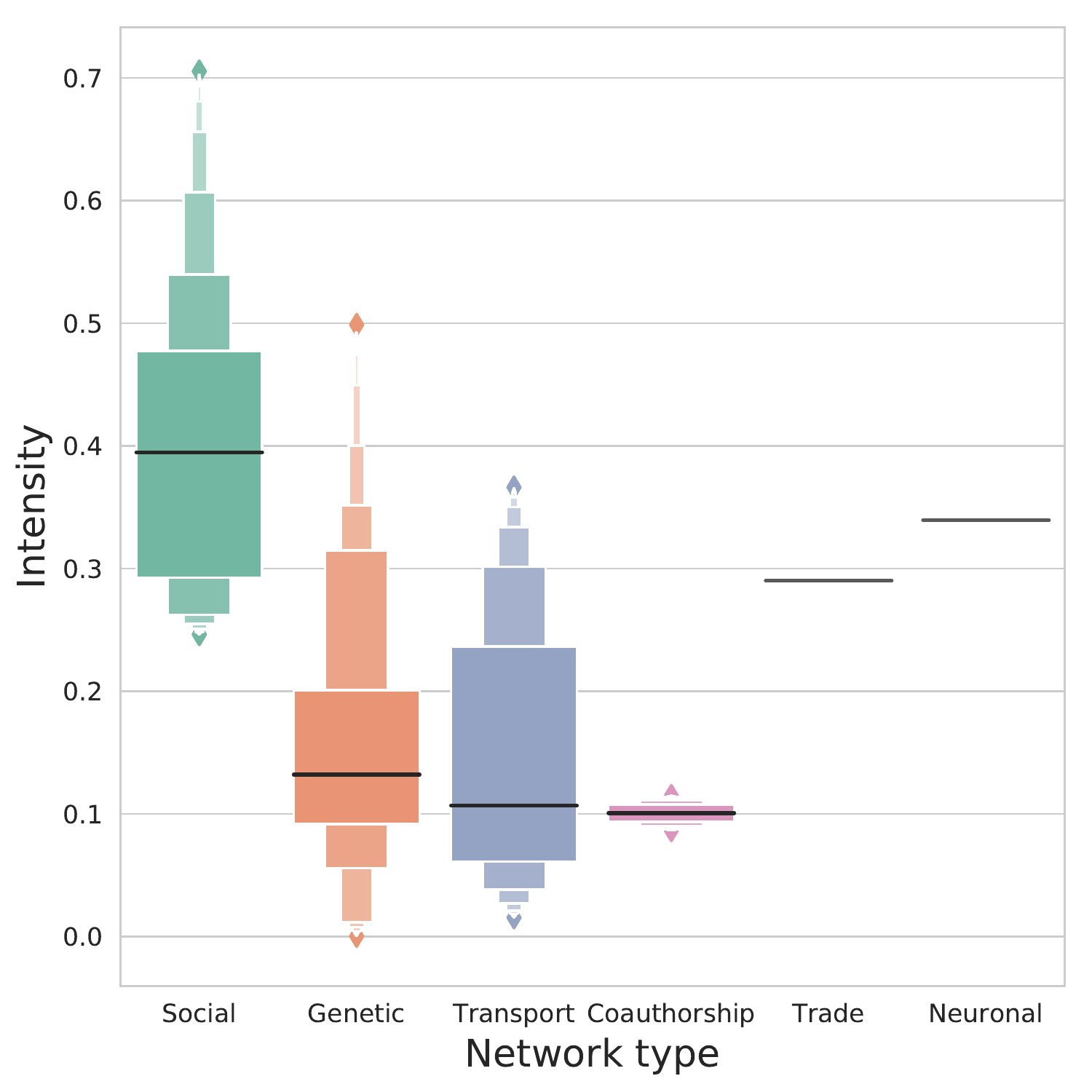}} \\
\end{tabular}
\subcaptionbox{Real networks: $H \times I$}{\includegraphics[width = 0.7\linewidth]{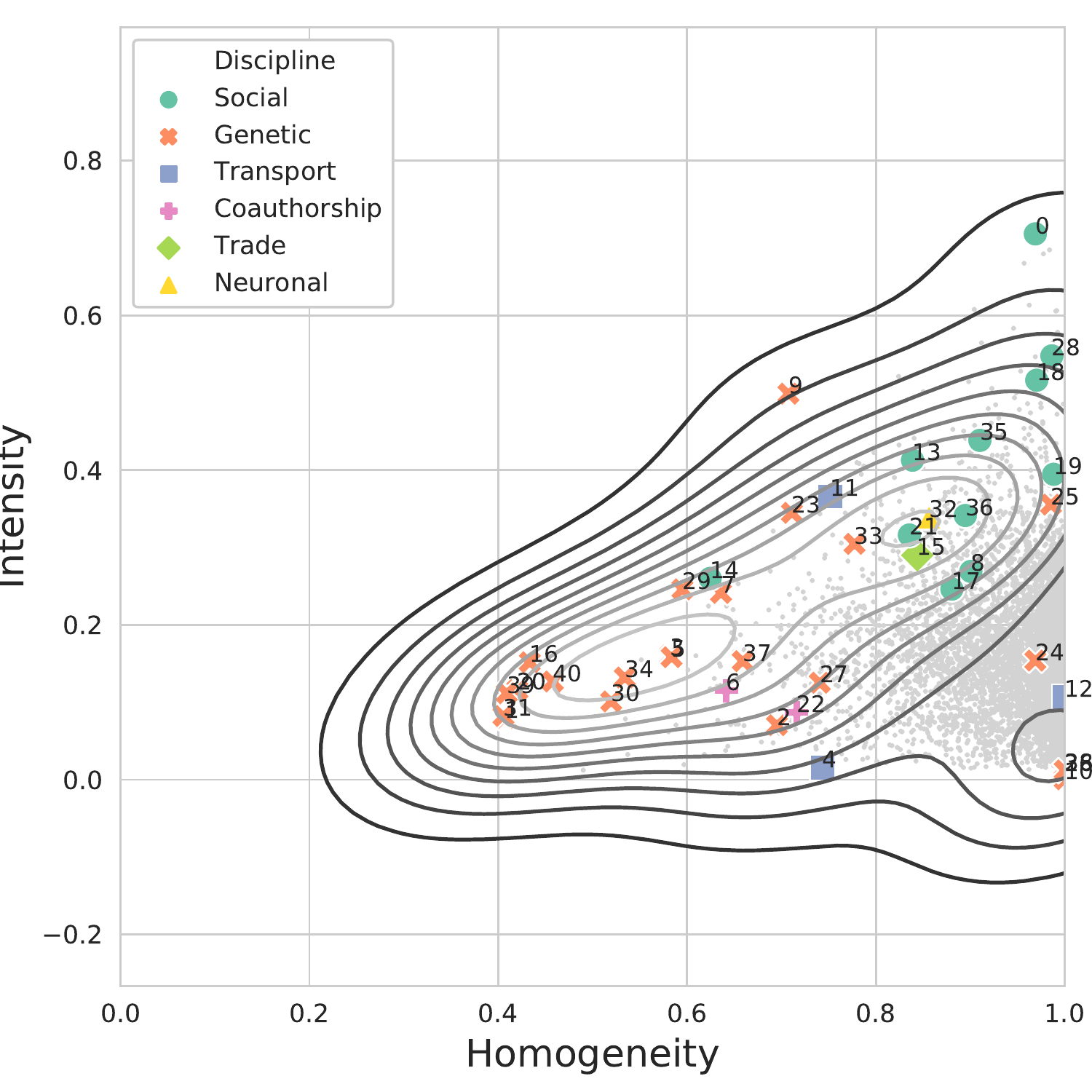}}
\caption{Results on real networks. Labels in (c) map to Table~\ref{tab:summary} (ID). Gray dots represent synthetic samples.}
\label{fig:real}
\end{figure}

\begin{figure}[h!]
\centering
\begin{tabular}{cc}
\subcaptionbox{Synthetic networks: homogeneity $H$}{\includegraphics[width = 2.2in]{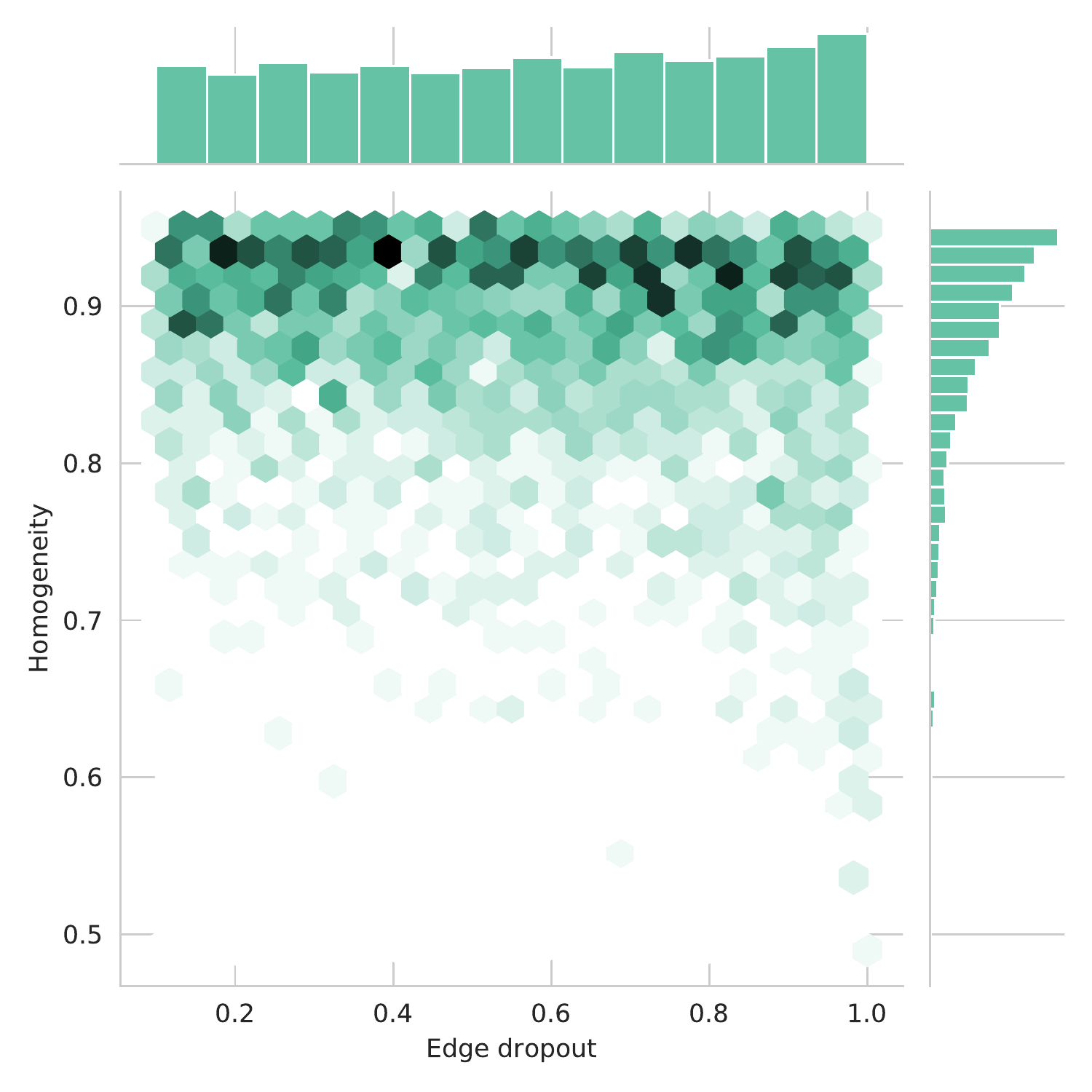}} &
\subcaptionbox{Synthetic networks: intensity $I$}{\includegraphics[width = 2.2in]{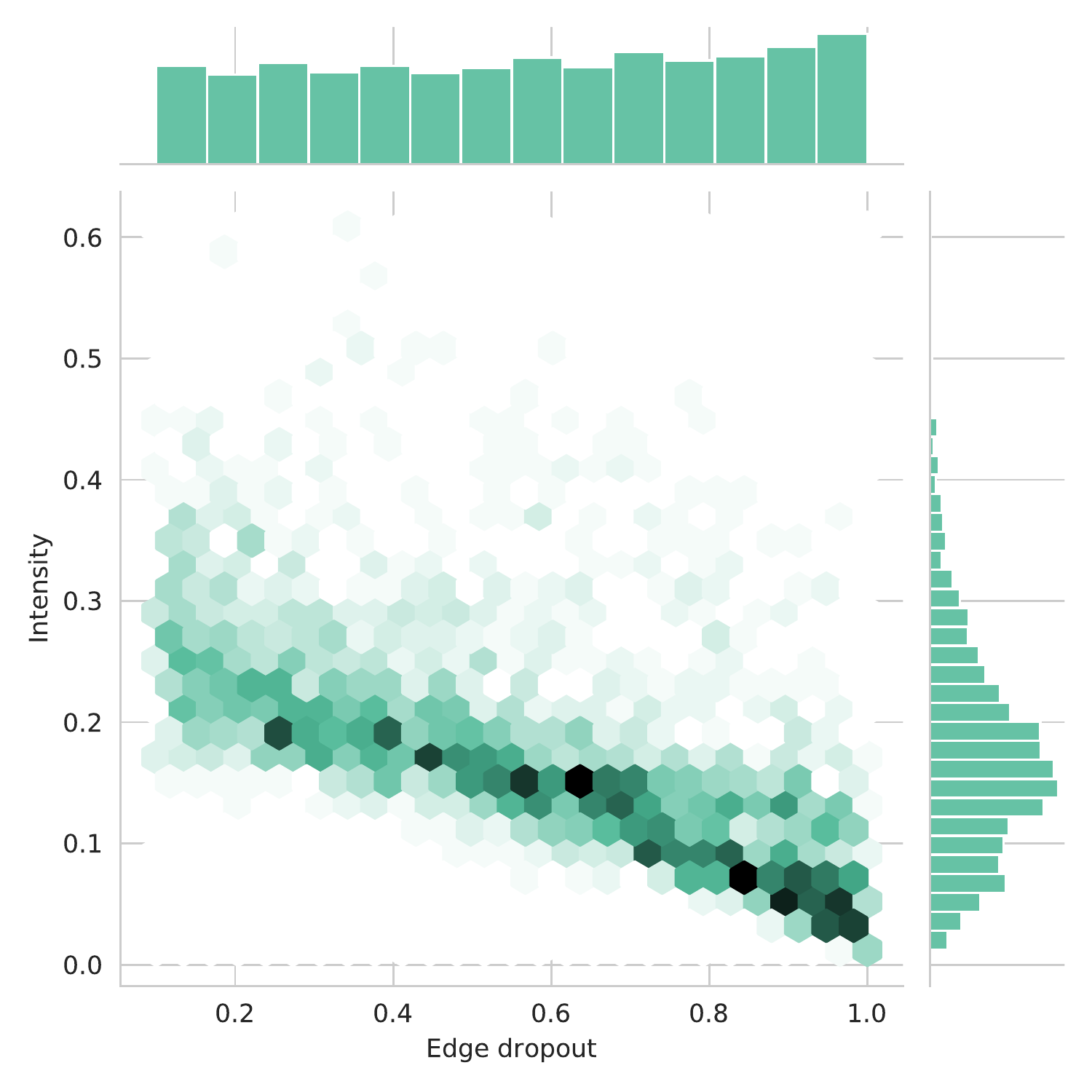}} \\
\subcaptionbox{Dependence on dropout: $H \times I$}{\includegraphics[width = 2.2in]{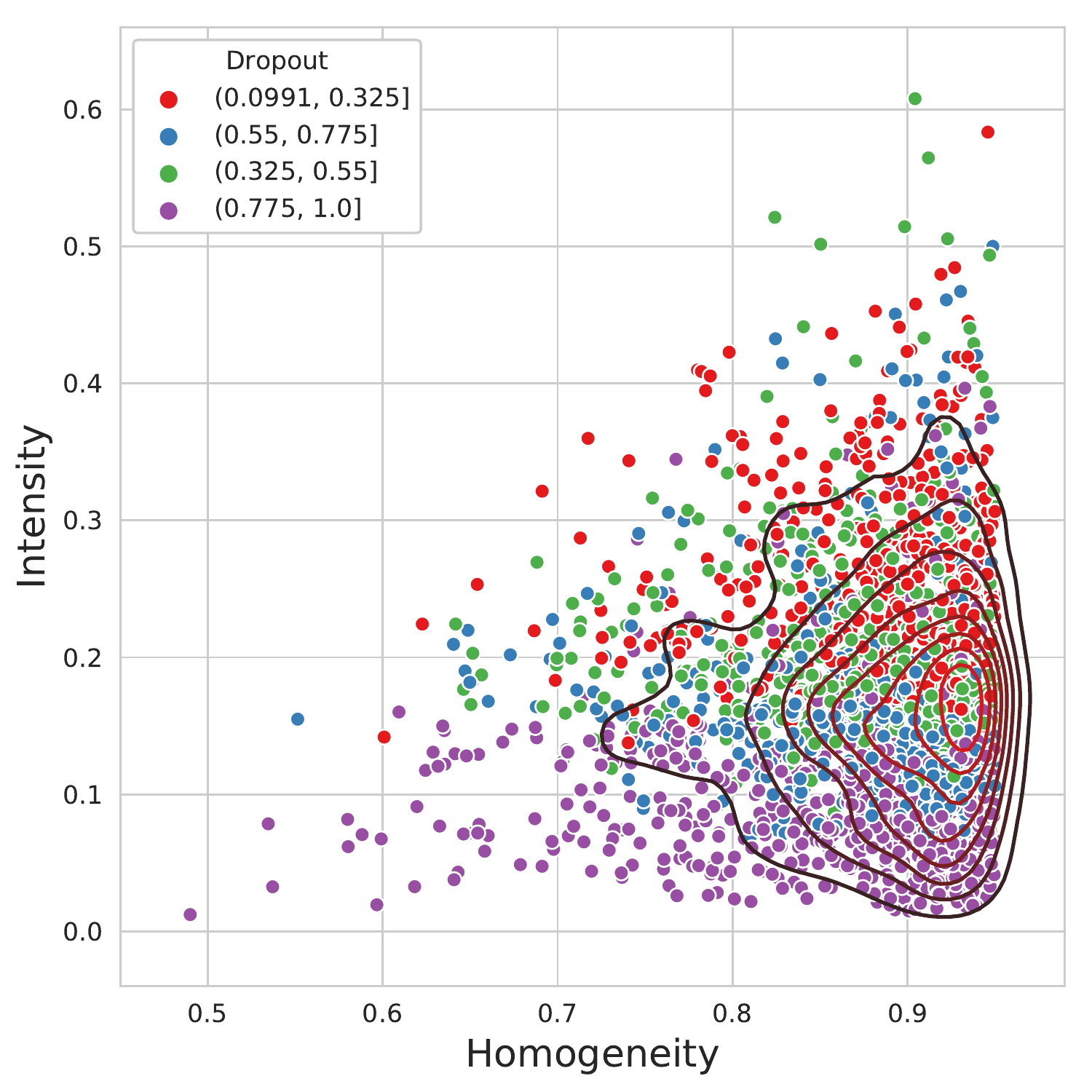}} &
\subcaptionbox{Dependence on number of layers: $H \times I$}{\includegraphics[width = 2.2in]{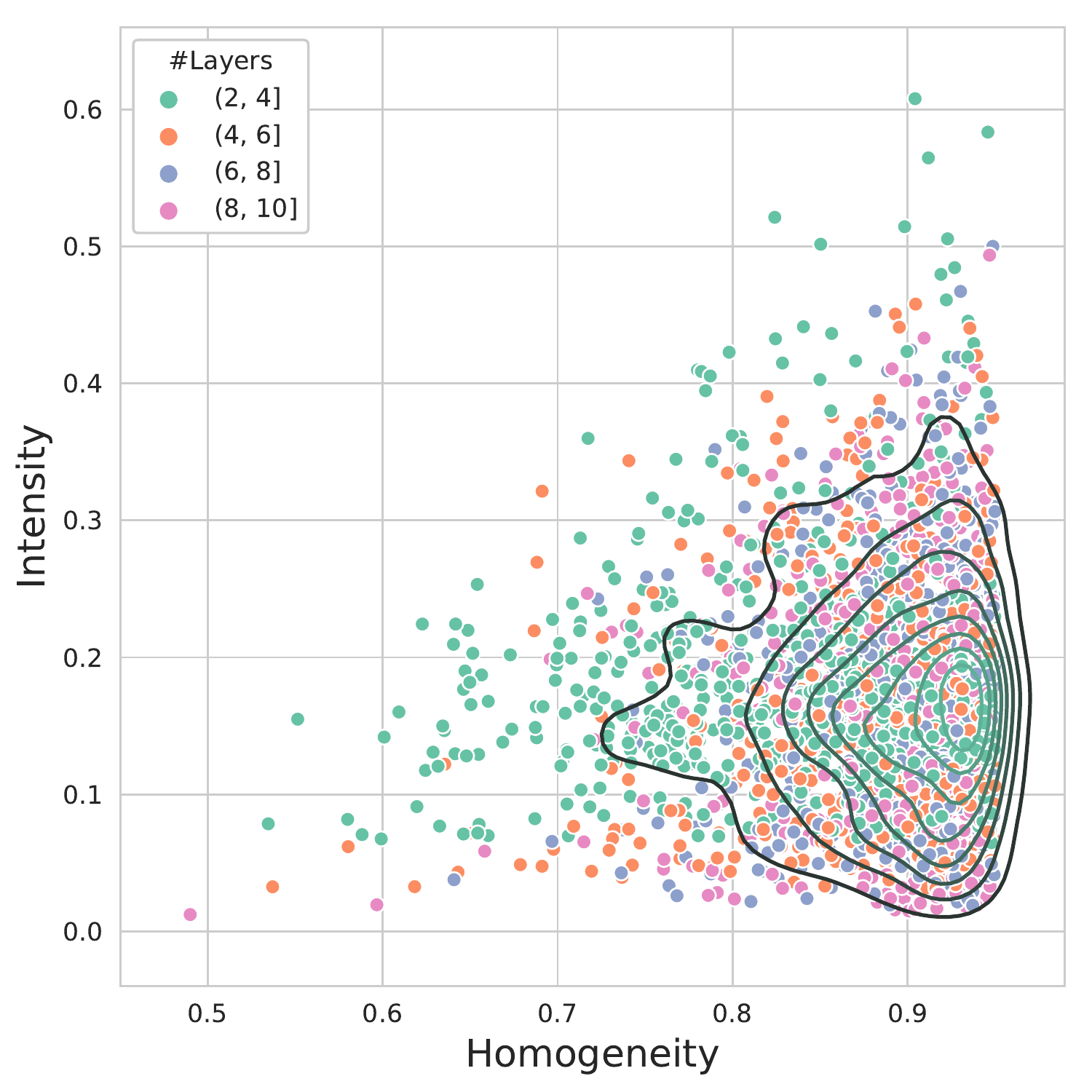}} \\
\end{tabular}

\caption{Results on 11{,}905 synthetic networks. The homogeneity (a) and intensity (b) dependence on dropout and number of layers parameter results in heavy-tailed distributions of the two measures (c, d).}
\label{fig:syn}
\end{figure}

Two main observations are apparent when studying the results on real networks. First, the difference between social and genetic (biological) multiplex networks becomes obvious when both entanglement intensity, as well as homogeneity are considered. 
We further visualize the two most apparent distributions, \textit{i.e.}, the intensity and homogeneity of social \textit{vs.} genetic networks in Figure~\ref{fig:dist}.

\begin{figure}[h]
\centering
\begin{tabular}{cc}
\subcaptionbox{Real - homogeneity}{\includegraphics[width = 2.3in]{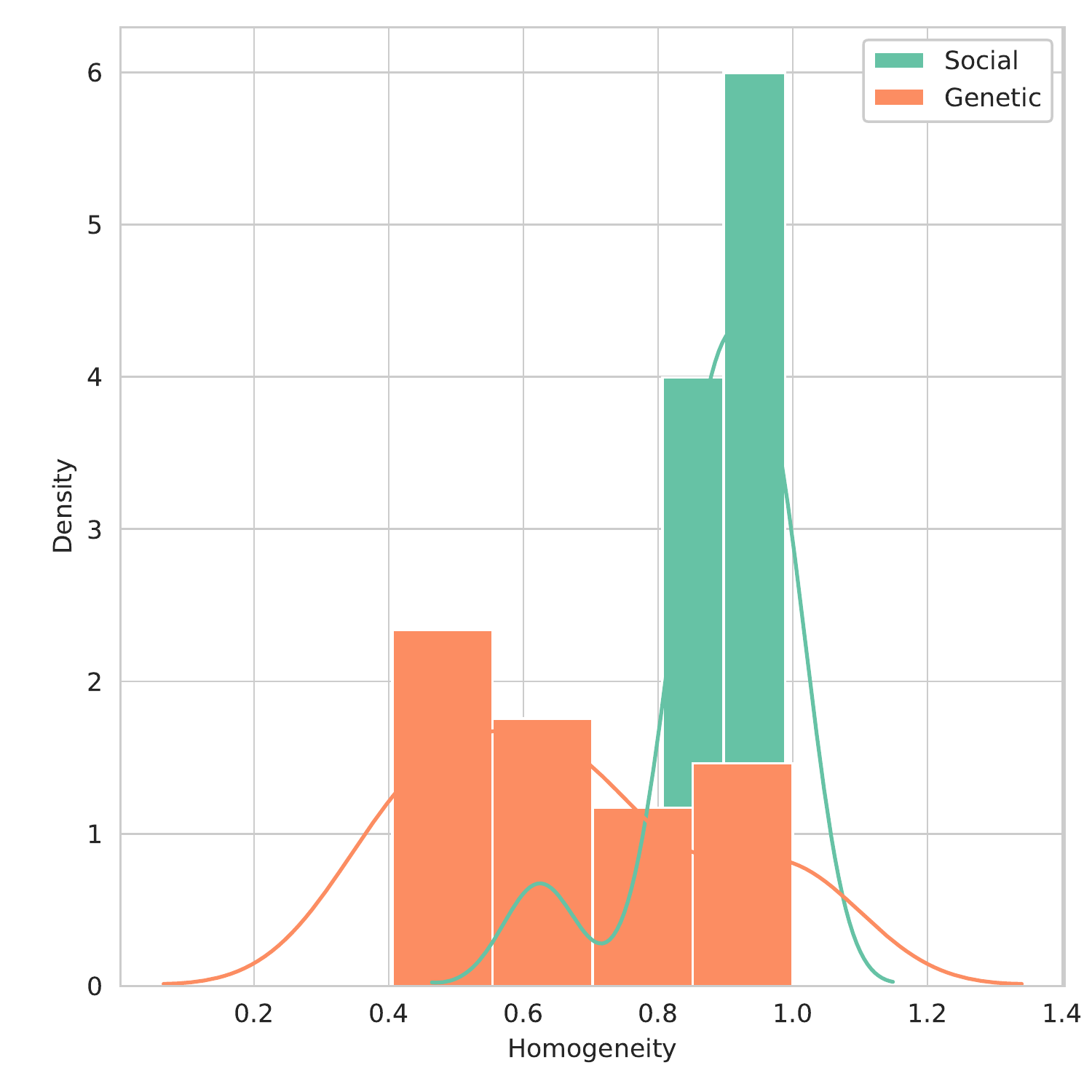}} &
\subcaptionbox{Real - intensity}{\includegraphics[width = 2.3in]{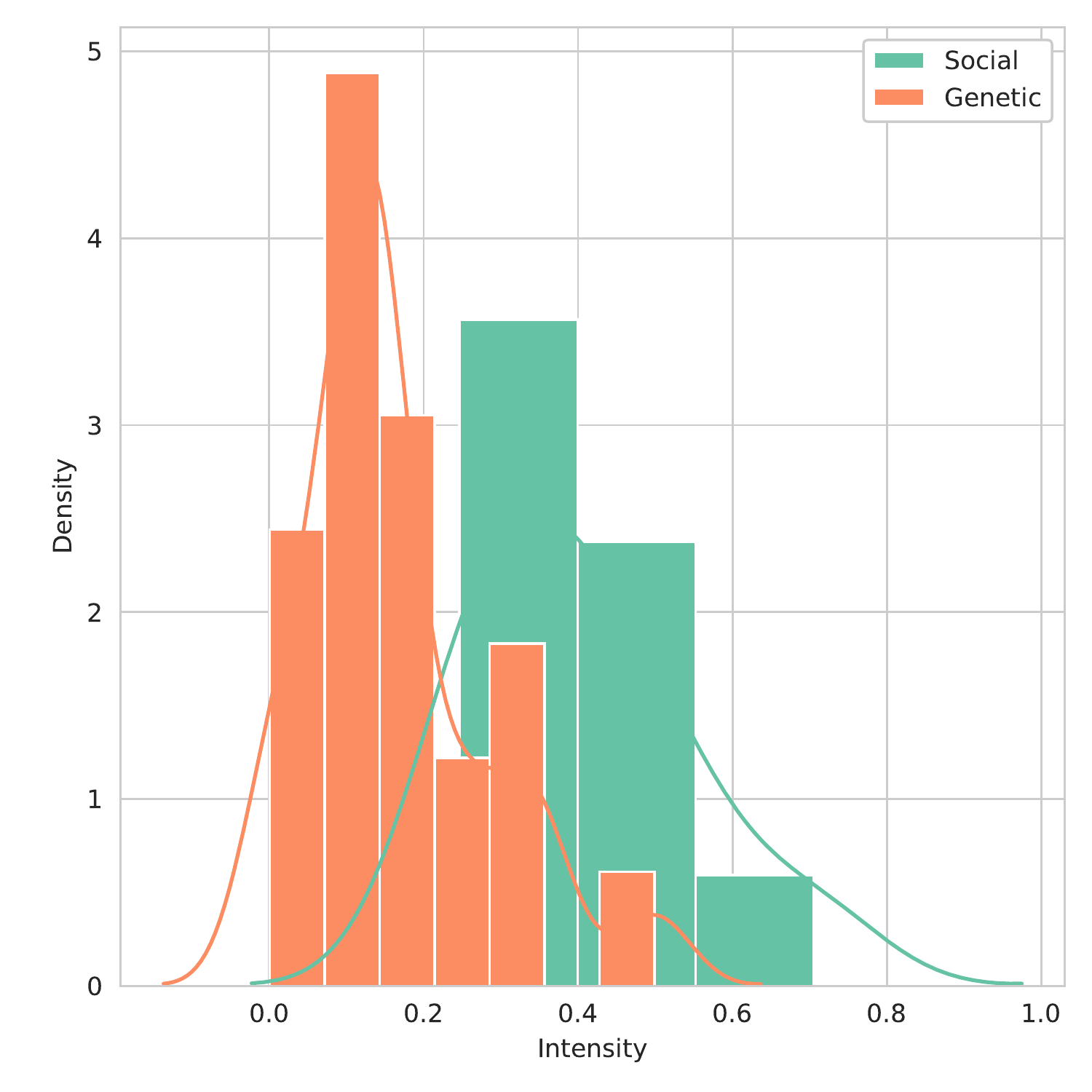}} \\
\end{tabular}
\caption{Distributions of homogeneity and intensity when genetic networks are compared to social ones.}
\label{fig:dist}
\end{figure}

The properties of synthetic networks were plotted with respect to the dropout parameter. The reader can observe apparent linear trend between the dropout (sparseness) and entanglement intensity (Figure~\ref{fig:syn} (b)). This trend indicates sparser networks are less ``intensely'' coupled. As intensity directly measures this property, this result outlines one of the \emph{desired} properties of the proposed network generator.

The reader can also observe high density of \emph{networks} in the space of high homogeneity and average or low intensity (Figure~\ref{fig:syn} (d)). This property directly reflects the sampling procedure, as the majority of the considered networks consist of edges, which co-occur in majority of layers. A similar observation can be observed in Figure~\ref{fig:syn} (a,c), where denser regions of the homogeneity/intensity space emerge when higher homogeneity is considered. Note that we also visualized the dependence of the synthetic network's properties on the dropout, as well as the number of layers --- both parameters determine a given multiplex's structure.

\section{Discussion and conclusion}
In this paper we demonstrated that two measures for assessing the relation between layers in a given multiplex network offer interesting insights when computed across a wide array of real-world networks. To our understanding, the observed relationship between the intensity and homogeneity of layer entanglement was not yet reported. We showed that real networks cluster based on their type (\textit{e.g.} biological \textit{vs.} social). Apart from experiments on real networks we also generated a large set of synthetic ones, where the analysis outlined the following properties: Intensity is directly correlated with edge dropout parameter --- the sparser the network, the lower the intensity. This result indicates the proposed generator indeed emits networks which adhere to this property. Next, we observe that large parts of the generated networks are subject to high homogeneity with various degrees of entanglement intensity.

The detailed inspection of the synthetic networks with respect to the parameters $d$ and the number of layers ($k$) reveals that the generative process is more sensitive to dropout (layered patterns of intensity emerge), than to the number of layers (uniformly distributed \textit{w.r.t.} homogeneity). This property indicates the model's properties could also be investigated theoretically, which we leave for further work. 

In addition, we may observe (from Figure~\ref{fig:real}) that our set of genetic networks tend to match networks with higher dropout, as opposed to social networks which tend to find their way in lower dropout area. This should be further investigated, but this may be related to \textit{homophily}~\cite{mcpherson2001birds,borgatti2009network}. Homophily is the implied similarity of two entities in a social network, and the property of entities to agglomerate when \textit{being similar}. If the reason of \textit{`being similar'} could be modeled as a layer of interaction, the result of a group of entities in \textit{`being similar'} would lead to the formation of a clique in this layer, hence locating social networks in low dropout areas.

The proposed work offers at least two prospects of multiplex network study which are in our belief worth exploring further. The difference between the genetic and social networks is possibly subject to very distinct topologies which emerge in individual layers. This claim can be empirically evaluated via measurement of \textit{e.g.}, graphlets, communities or other structures. Next, genetic networks are less homogeneous. Further work includes exploration of this fact, as it can be merely a property of the networks considered, empirical methodology used to obtain the networks or some other effect.

We believe that theoretical properties of the proposed network generator can also be further studies, offering potential insights into how multiplex networks behave and whether the human-made aspects are indeed representative of a given system's state.

\section{Acknowledgements}
The work of the first author was funded by the Slovenian Research Agency through a young researcher grant.
The work of other authors was supported by the Slovenian Research Agency (ARRS) core research programme \emph{Knowledge Technologies} (P2-0103) and ARRS funded research project
\emph{Semantic Data Mining for Linked Open Data} (financed under the ERC Complementary Scheme, N2-0078). We also acknowledge Dagstuhl seminar-19061~\cite{kivela2019visual} where many ideas implemented in this paper emerged.

\bibliographystyle{splncs03}
\bibliography{bibliography}

\end{document}